%
%
%
%
%
%
%
\documentclass[%
 reprint,
 amsmath,amssymb,
 aps,
]{revtex4-2}

\usepackage{graphicx}
\usepackage{dcolumn}
\usepackage{bm}


\usepackage{algorithm}
\usepackage{algpseudocode}
\usepackage{booktabs}
\usepackage{braket}
\usepackage{amsmath}

\usepackage{array}
\usepackage{caption}
\usepackage{graphicx} 

\usepackage{mathtools}

\usepackage{threeparttable} 

\usepackage{amsmath, amssymb, amsthm} 

\newtheoremstyle{appendixstyle} 
  {12pt}       
  {12pt}       
  {\itshape}   
  {}           
  {\bfseries}  
  {.}          
  {.5em}       
  {}           

\theoremstyle{appendixstyle}
\newtheorem{theorem}{Theorem}[section] 

\begin{document}

\preprint{APS/123-QED}

\title{Statistical Quantum Mechanics of the Random Permutation Sorting System (RPSS): A Self-Stabilizing True Uniform RNG}


\author{Randy Kuang}
  \email{randy.kuang@quantropi.com}
\affiliation{%
Quantropi (Canada)\\
 1545 Carling Av., Suite 620, Ottawa,ON K1Z 8P9, Canada. 
}%

\date{\today}

\begin{abstract}
\noindent We introduce the \emph{Random Permutation Sorting System} (RPSS), a novel, software-defined framework that generates high-quality, uniform randomness by leveraging a statistical model with properties analogous to quantum ensembles. The RPSS operates as a self-contained entropy engine based on a fundamental pair of \textbf{conjugate observables}: the permutation count ($\hat{N}_p$) and the elapsed sorting time ($\hat{T}$). While the raw distributions of these observables are highly right-skewed and heavy-tailed, we provide a mathematical proof of their synchronous convergence to uniformity. The permutation count distributions follow a negative binomial model, and the observed modes for different repetition parameters \(m\) closely match the theoretical predictions, confirming the validity of the model. This convergence arises from the compounded effect of permutation counts on elapsed time and a process of \textbf{modular uniformization}, in which a vast number of microstates—stemming from combinatorial complexity and intrinsic system-level jitter—map onto a finite set of uniform symbols. This mechanism enables the RPSS to function as a \textbf{True Uniform Random Number Generator} (TURNG).
We implemented the RPSS as the \textbf{QPP-RNG}, a practical software realization that autonomously harvests entropy from a platform's intrinsic microarchitectural dynamics. Unlike prior benchmark-focused studies, this work provides a deep theoretical and experimental investigation of the underlying physical mechanism. Empirical validation confirms our predictions, demonstrating rapid entropy convergence and statistically uniform outputs, as assessed by \textbf{NIST SP 800-90B min-entropy} and \textbf{chi-squared statistics}, consistent with central-limit-theorem expectations.
The RPSS establishes a new class of \textbf{quantum-inspired eco-cryptosystems}, in which randomness is simultaneously harvested from unpredictable hardware behavior and amplified through a combinatorial process. This dual-source approach provides a compact, self-stabilizing entropy engine suitable for foundational applications in cryptography, blockchain protocols, and digital currency systems, offering a robust, platform-agnostic alternative to conventional entropy sources.
\end{abstract}

\keywords{Quantum Permutation Pad (QPP), QPP-RNG, Quasi-Superposition Quantum-inspired System, True Random Number Generator (TRNG), System-level jitter, Entropy gap, Post-quantum cryptography, Permutation sorting, Entropy harvesting, Physical randomness}

\maketitle

\section{Introduction}\label{sec:introduction}

Randomness is a fundamental resource across science and technology, underpinning statistical physics, cryptography, and quantum information science~\cite{herrero2017physical, menezes1996handbook}. In cryptography, the quality of randomness is critical: secret keys, nonces, and ephemeral parameters must be both unpredictable and statistically uniform~\cite{nist80090a}.

Random number generators (RNGs) are typically classified by their entropy sources. Pseudorandom number generators (PRNGs) expand a small seed deterministically~\cite{lehmer1951, xorshift, MacLaren1970, marsaglia2003xorshift}. True random number generators (TRNGs) exploit classical stochastic processes such as thermal noise~\cite{holman1997integrated}, avalanche noise~\cite{sunar2007true}, or oscillator jitter, with \emph{JitterEntropy} being a notable CPU-based implementation~\cite{mueller2022cpujitter}. Quantum random number generators (QRNGs) instead rely on intrinsic quantum indeterminacy, e.g., single-photon detection or vacuum fluctuation sampling~\cite{zhang2023qrng, qrng-Gabriel2010, qrng-Ma2016}. Both TRNGs and QRNGs, however, typically yield biased or skewed raw outputs, necessitating post-processing through extractors or whitening filters~\cite{nist80090b}. Operating systems aggregate these entropy sources into deterministic random bit generators (DRBGs), such as Linux \texttt{/dev/random} and Windows CNG, but these remain dependent on external reseeding.

Building on the framework of the \emph{Quantum Permutation Pad (QPP)}~\cite{kuang2020qpp, qpp-springer-kuang-2022}, which generalizes the one-time pad into the full permutation group of $n$-bit data, we recently developed QPP-RNG: a software-based TRNG harvesting entropy from microarchitectural fluctuations in commodity hardware~\cite{qpp-rng-sci-kuang-2025}. These fluctuations, measured in discrete system clock units (ticks or nanoseconds), arise from CPU pipeline jitter, cache misses, memory-access latency, and OS scheduling. By cycling through iterative permutations, QPP-RNG amplifies this entropy and produces statistically uniform outputs independently of external post-processing.

Empirical evaluations show that QPP-RNG consistently passes stringent statistical tests, including the NIST SP 800-90B IID assessments, NIST SP 800-22 suite, and ENT benchmarks, across diverse platforms. Yet, the mechanism by which skewed timing and permutation data converge to uniformity remained elusive.

To provide a clear and quantitative basis for describing the observed uniformity, we classify QPP-RNG outputs using the NIST SP 800-90B MCV (Most Common Value) min-entropy estimator.

Streams with min-entropy above 7.95 bits/byte are labeled as \textbf{Near} uniform, and those above 7.99 bits/byte as \textbf{Essential} uniform. Supporting statistics such as mean and serial correlation are reported for diagnostic purposes but do not determine classification (see Sec.~\ref{sec:experiments}).

To explain this convergence, we introduce the \emph{Random Permutation Sorting System (RPSS)}, a quantum-inspired conceptual model formalizing QPP-RNG as a Quasi-Superposition Quantum System (QSQS)~\cite{kuang-qsqs-pre-2025}. RPSS is characterized by a pair of conjugate-like observables: a discrete, deterministic variable (the permutation count $\hat{N}$) and a discrete, stochastic variable (the sorting time $\hat{T}$). The inherent uncertainty in $\hat{T}$ prevents simultaneous precise determination with $\hat{N}$, providing a natural entropy source. Crucially, the skewed raw distributions of $\hat{N}$ and $\hat{T}$ are mapped into computational symbols via \emph{modular projection}, a process in which multiple microstates collapse onto the same symbol, structurally amplifying entropy and yielding near-uniform outputs. This transformation effectively converts a conventional TRNG into a \emph{True Uniform RNG (TURNG)} without additional extractors.

RPSS thus demonstrates a dual mechanism: entropy is harvested from intrinsic system fluctuations and simultaneously amplified by permutation degeneracy. This physics-inspired formulation provides a foundation for compact, self-contained entropy sources with potential applications to post-quantum cryptography, blockchains, and digital currencies, reducing reliance on external entropy while ensuring both unpredictability and uniformity.

The remainder of the paper is organized as follows: Sec.~\ref{sec:model} formalizes the RPSS framework; Sec.~\ref{sec:experiments} presents experimental results and entropy convergence analyses; and Sec.~\ref{sec:conclusion} concludes.

\section{Theory and Model}
\label{sec:model}

We present a formal theoretical framework for the \emph{Random Permutation Sorting System} (RPSS), an evolution of the previously introduced Quasi-Superposition Quantum System (QSQS)~\cite{kuang2025qpp-rng-axiv}. While QSQS highlighted the quantum-like uncertainty inherent in RPSS, here we provide a comprehensive treatment grounded in statistical quantum mechanics and information theory. We model RPSS as a classical system that exhibits statistical quantum behavior by mapping deterministic computation and stochastic physical effects or system jitter into a probabilistic structure with properties analogous to those of quantum ensembles.

\emph{\textbf{Important Note}:} The bra-ket notation and terms like "state" and "collapse" are used throughout this work as a powerful \emph{analogical framework} to describe the statistical properties of the classical RPSS. This formalism is chosen for its expressive power in capturing non-commutative observables and preparational uncertainty. No quantum phenomena are involved.

This section is organized in two parts. First, we define the components of the RPSS and its measurable observables in the subsection \textbf{The RPSS System}. Second, we formalize the statistical dynamics of the sorting process, showing how it generates high-quality entropy from its combinatorial and physical properties in the subsection \textbf{Statistical Mechanism of the RPSS System} (\ref{sec:principle}).

\subsection{The RPSS System}

The RPSS is defined as a computational--physical system consisting of the following interacting components:

\begin{itemize}
    \item \textbf{Computing substrate.}
    The underlying classical computing platform, including CPU pipelines and execution units, caches, main memory, buses, network interfaces, and the OS scheduler. This substrate is a source of stochastic fluctuations arising from thermal noise, voltage variations, interrupt timing jitter, process migration, and variable memory latency. These physical effects translate into significant runtime variance for individual operations (e.g., a cache miss can increase memory access latency by orders of magnitude), which is a key source of entropy for the elapsed-time observable.

    \item \textbf{Data.}
    A disordered integer array of \(N\) elements, \(\{c_i\}\), which can be viewed as the result of applying a permutation to an ordered reference array \(\{a_i\}\).

    \item \textbf{Core mechanism.}
    A random permutation sorting procedure whose goal is to find the correct inverse permutation.

    \item \textbf{Physical observables.}
    Two measurable quantities naturally arise:
    \begin{enumerate}
        \item \(\hat{N}_p\): the permutation count in a sorting cycle,
        \item \(\hat{T}\): the elapsed time of the sorting cycle.
    \end{enumerate}
    Due to stochastic substrate fluctuations, even identical permutation processes with the same \(\hat{N}_p\) exhibit variability in \(\hat{T}\). This reflects a conjugate-like relationship:
    \[
        [\hat{N}_p, \hat{T}] \ne 0,
    \]
    indicating that these observables cannot be simultaneously determined with arbitrary precision.

    \item \textbf{System states and measurement.}
    The initial state of the system is the known, disordered array, denoted as \(\ket{\{c_i\}}\). The action of the random permutation sorting process, denoted by the stochastic operator \(\hat{p}^{-1}\), acts on this preparatory state. Due to the stochasticity of the substrate, this operation generates a probability distribution over all possible outcomes, represented as a superposition of eigenstates:
    \begin{equation}\label{eq:state-superposition}
        \begin{aligned}
            \ket{\{c_i\}} &= \sum_{n_p} \beta_{n_p} \ket{n_p, \{a_i\}} \\
            &= \sum_t \gamma_t \ket{t, \{a_i\}},
        \end{aligned}
    \end{equation}
    where \(\beta_{n_p}\) and \(\gamma_t\) are probability amplitudes satisfying \(|\beta_{n_p}|^2 = \Pr(n_p)\) and \(|\gamma_t|^2 = \Pr(t)\). The eigenstates are defined by:
    \begin{equation}
    \begin{aligned}
        \hat{N}_p \ket{n_p, \{a_i\}} &= n_p \ket{n_p, \{a_i\}}, \\
        \hat{T} \ket{t, \{a_i\}} &= t \ket{t, \{a_i\}}.
        \end{aligned}
    \end{equation}
    
    The completion of a sorting cycle constitutes a \emph{measurement}, causing the system to resolve into a specific outcome state \(\ket{n_p, t, \{a_i\}}\) with probability \(|\alpha_{n_p, t}|^2\), where \(\alpha_{n_p, t}\) is the joint probability amplitude. The marginal probabilities are given by:
    \begin{equation}\label{eq:marginals2}
        |\beta_{n_p}|^2 = \sum_t |\alpha_{n_p,t}|^2, \qquad |\gamma_t|^2 = \sum_{n_p} |\alpha_{n_p,t}|^2.
    \end{equation}
    This measurement process is not a quantum physical effect but a formal description of the system's transition from a distribution of potential outcomes to a single realized outcome.
\end{itemize}

RPSS is therefore a concrete framework in which deterministic computation and stochastic physical effects jointly generate entropy. This duality between structured permutation dynamics and physical noise forms the foundation of the RPSS entropy source.

\subsection{Statistical Mechanism of the RPSS System}\label{sec:principle}

We now formalize the core sorting process using the state-operator formalism. The central idea is to model the disordered array as the result of an unknown permutation, \(\hat{p}\), applied to an ordered array \(\{a_i\}\). The goal of the random permutation sorting procedure is to apply the stochastic operator \(\hat{p}^{-1}\) to recover the original ordered state.

For an array of length \(N\), the permutation space \(\mathcal{P}\) contains \(N!\) elements. The RPSS entropy source operates by constructing \(\hat{p}^{-1}\) through the sequential multiplication of random permutations \(\hat{p}_j \in \mathcal{P}\):
\begin{equation}\label{eq:qpp}
    \hat{p}^{-1} = \prod_{j=1}^{n_p} \hat{p}_j.
\end{equation}
Infinitely many such paths exist, differing in length \(n_p\), choice, and order of permutations, spanning an exponentially large space and introducing fundamental uncertainty. Uniformly random selection of \(\hat{p}_j\) is necessary for convergence; biased selection may induce cycles or stalling.

The action of the stochastic sorting process on the initial state defines a superposition of all possible outcomes. This process, which finds the inverse permutation $\hat{p}^{-1}$ by applying a sequence of random permutations, is represented as a transformation that maps the initial state to a probability distribution over the eigenstates of the observables:
\begin{equation}
\hat{p}^{-1} \ket{\{c_i\}} = \sum_{n_p, t} \alpha_{n_p, t} \ket{n_p, t, \{a_i\}}.
\end{equation}
The subsequent measurement, which occurs upon completion of the sorting cycle, collapses this superposition into a single outcome state:
\[
\ket{n_p, t, \{a_i\}} \quad \text{with probability} \quad |\alpha_{n_p, t}|^2.
\]
 with \(\sum_{n_p, t} |\alpha_{n_p, t}|^2 = 1\).

The system's evolution gives rise to a joint probability distribution over the observables \(\hat{N}_p\) and \(\hat{T}\). For cryptographic applications, we measure the modular-reduced observables:
\[
\tilde{N}_p = \hat{N}_p \bmod 2^n, \quad \tilde{T} = \hat{T} \bmod 2^n,
\]
which provide the final entropy output.

The noncommutativity expression \([\hat{N}_p, \hat{T}] \ne 0\) represents a mathematical analogy highlighting the conjugate-like relationship between these observables. This formalism captures the essential feature that the combinatorial randomness of the permutation count and the physical randomness of the elapsed time are statistically coupled and cannot be simultaneously determined with arbitrary precision.

The marginal distributions are given by:
\begin{equation}\label{eq:marginals}
    |\beta_{n_p}|^2 = \sum_t |\alpha_{n_p,t}|^2, \qquad
    |\gamma_t|^2 = \sum_{n_p} |\alpha_{n_p,t}|^2,
\end{equation}
and the system state can be expressed in either basis:
\begin{equation}\label{eq:dist}
    \begin{aligned}
        \hat{p}^{-1} \ket{\{c_i\}} &= \sum_{n_p} \beta_{n_p} \ket{n_p, \{a_i\}}, \\
        &= \sum_t \gamma_t \ket{t, \{a_i\}},
    \end{aligned}
\end{equation}
with the eigenstates defined by:
\begin{equation}
    \hat{N}_p \ket{n_p, \{a_i\}} = n_p \ket{n_p, \{a_i\}}, \quad
    \hat{T} \ket{t, \{a_i\}} = t \ket{t, \{a_i\}}.
\end{equation}

Introducing the computational $n$-bit basis, we define the modular-reduced eigenstates:
\begin{equation}
    \hat{\mathcal{N}}_p \ket{\tilde{n}_p, \{a_i\}} = \tilde{n}_p \ket{\tilde{n}_p, \{a_i\}}, \quad
    \tilde{n}_p = n_p \bmod 2^n,
\end{equation}
\begin{equation}
    \hat{\mathcal{T}} \ket{\tilde{t}, \{a_i\}} = \tilde{t} \ket{\tilde{t}, \{a_i\}}, \quad
    \tilde{t} = t \bmod 2^n.
\end{equation}
The state can then be represented in the computational basis as:
\begin{equation}\label{eq:dist-reduce}
    \begin{aligned}
        \hat{p}^{-1} \ket{\{c_i\}}
        &= \sum_{\tilde{n}_p=0}^{2^n-1} \tilde{\beta}_{\tilde{n}_p} \ket{\tilde{n}_p, \{a_i\}}, \\
        &= \sum_{\tilde{t}=0}^{2^n-1} \tilde{\gamma}_{\tilde{t}} \ket{\tilde{t}, \{a_i\}},
    \end{aligned}
\end{equation}
where the probability amplitudes \(\tilde{\beta}_{\tilde{n}_p}\) and \(\tilde{\gamma}_{\tilde{t}}\) define the modular-reduced distributions over the $n$-bit computational basis.

This operator-level uncertainty ensures that the stochasticity of \(n_p\) and \(T\) is fundamentally coupled, producing high-quality entropy that manifests in the synchronous convergence of the modular-reduced variables \(\tilde{N}_p\) and \(\tilde{T}\). The RPSS system exploits this joint uncertainty to guarantee that both observables contribute to a self-stabilizing, near-uniform output, forming the cornerstone of the self-converged TURNG behavior.

We will experimentally demonstrate that while raw distributions of \(\hat{N}_p\) and \(\hat{T}\) are generally right-skewed with long tails, their modular-reduced distributions become essentially uniform once the RPSS system reaches entropy convergence, typically when \(M = m \cdot N! \gg 2^n\).

\subsection{Statistical Laws of RPSS Observables}
\label{sec:statistics}

\subsubsection{Permutation Count}\label{sec:permutation-count}
We now analyze the distribution of the permutation count \(n_p\), defined as the number of random draws required for a QPP pad \(\hat{p}\) to generate the target inverse permutation operator \(\hat{p}^{-1}\). A QPP pad is constructed by sequentially multiplying random permutations \(\pi \in \mathcal{P}\), where \(\mathcal{P}\) is the permutation group of size \(N!\). At each trial, one permutation is sampled independently and uniformly from \(\mathcal{P}\).

Let \(M = m \cdot N!\) represent the effective counting for the \(m\)-th success. However, the success probability per trial is determined solely by the size of the permutation group:
\[
p = \frac{1}{N!}.
\]
This reflects the probability that a single permutation trial produces the inverse permutation \(\hat{p}^{-1}\).

\paragraph*{General Case: \(m\)-th Success}
The process halts upon the \(m\)-th success (i.e., the inverse permutation \(\hat{p}^{-1}\) is generated \(m\) times). The permutation count \(n_p\) follows a negative binomial distribution:
\begin{equation}
\Pr[n_p = k] = |\beta_{n_p}|^2 = \binom{k-1}{m-1} (1-p)^{k-m} p^m, \qquad k \geq m,
\end{equation}
with success probability \(p = 1/N!\).

The expectation and variance are:
\[
\mathbb{E}[n_p] = \frac{m}{p} = m \cdot N!, \qquad 
\mathrm{Var}(n_p) = \frac{m(1-p)}{p^2} \approx m \cdot (N!)^2.
\]

\paragraph*{Special Case: \(m=1\) (First Success)}
When the process halts upon the first success, the negative binomial distribution reduces to a geometric distribution:
\begin{equation}
\Pr[n_p = k] = |\beta_{n_p}|^2 = (1-p)^{k-1} p, \qquad k \geq 1.
\end{equation}
The expectation and variance simplify to:
\[
\mathbb{E}[n_p] = \frac{1}{p} = N!, \qquad 
\mathrm{Var}(n_p) = \frac{1-p}{p^2} \approx (N!)^2.
\]
In the continuous limit with \(N! \gg 1\) and \(p \ll 1\), the geometric law converges to an exponential distribution with rate parameter \(\lambda = 1/N!\):
\begin{equation}
\Pr[n_p \approx x] \sim \lambda e^{-\lambda x}, \qquad \lambda = \frac{1}{N!}.
\end{equation}
This theoretical result aligns with empirical observations (e.g., for \(N>2\), \(m=1\)), where the histogram of \(n_p\) decays approximately exponentially, with the first trial probability \(\Pr[n_p=1] = p = 1/N!\).

\paragraph*{Mode of Permutation Count Distribution}
In addition to the expectation and variance, we also examine the mode of the permutation count distribution. The mode corresponds to the value of \(n_p\) that occurs most frequently, i.e., the most likely number of trials required for the system to generate the inverse permutation \(\hat{p}^{-1}\).

For the general case of the \(m\)-th success (\(m > 1\)), the mode of the negative binomial distribution is given by
\[
n_p^{\text{mode}} = \left\lfloor \frac{(m-1)(1-p)}{p} \right\rfloor + 1,
\]
where \(p\) is the success probability per trial. In our RPSS system, each permutation has a probability of
\[
p = \frac{1}{N!}.
\]
Substituting this into the mode formula, we obtain
\begin{equation}\label{eq:mode}
n_p^{\text{mode}} =  (m-1)(N!-1)  + 1.
\end{equation}
This accurately predicts the most likely number of trials required for the \(m\)-th occurrence of the inverse permutation.

For the special case of \(m = 1\) (first success), the distribution reduces to a geometric distribution, whose mode is
\[
n_p^{\text{mode}} = 1,
\]
indicating that the most frequent number of trials is just the first trial. This aligns with the observed RPSS behavior, where for \(N=4\) the modes for \(m=2,3,4\) are observed as 24–25, 48–49, and so on, consistent with the theoretical predictions.

\paragraph*{Distributional Properties}
For both cases, the distribution is highly right-skewed when \(M \gg 1\):
\begin{itemize}
    \item For \(m=1\), most realizations of \(n_p\) cluster around \(M = N!\), with the probability mass decaying exponentially. The mode of the distribution is also approximately \(M\), reinforcing this behavior.
    \item For \(m>1\), the distribution broadens: realizations cluster around \(M = m N!\), while the probability mass decays more slowly for large values of \(n_p\). The mode shifts to \((m-1)M\), and the distribution becomes longer-tailed.
\end{itemize}

The unified parameter \(M = m \cdot N!\) serves as the characteristic scale of the distribution, determining both its mean, variance, and mode. This formulation emphasizes the continuous transition between the geometric (\(m=1\)) and negative binomial (\(m>1\)) cases, providing a consistent framework for analyzing the permutation count statistics in the RPSS system.

\subsubsection{Elapsed-Time Observable \(\hat{T}\)}
\label{sec:elapsed-time}

In the RPSS system, the elapsed time \(\hat{T}\) for a complete sorting cycle is the fundamental measure of performance. It represents the cumulative runtime over all permutations applied in a QPP pad until the \(m\)-th success. While the permutation count \(\hat{N}_p\) (Sec.~\ref{sec:permutation-count}) measures the number of \emph{attempts}, \(\hat{T}\) incorporates the \emph{duration} of each attempt, making it a more complex and practically significant observable.

Let \(X_j\) be an independent and identically distributed (i.i.d.) random variable representing the runtime, including system jitter, for the \(j\)-th permutation. The total elapsed time is then a random sum:
\begin{equation}\label{eq:t}
\hat{T} = \sum_{j=1}^{\hat{N}_p} X_j,
\end{equation}
where \(\hat{N}_p \sim \text{NegativeBinomial}(m, p)\) is the random permutation count. This structure defines \(\hat{T}\) as a \textbf{compound random variable}.

\paragraph*{Compound Distribution Model}
The conditional moments of \(\hat{T}\) follow directly from the i.i.d. nature of the \(X_j\):
\begin{align}
\mathbb{E}[\hat{T} \mid \hat{N}_p = k] &= k \mu_X, \\
\mathrm{Var}(\hat{T} \mid \hat{N}_p = k) &= k \sigma_X^2.
\end{align}
Applying the law of total expectation and variance unveils how the variability in both the count and the runtime contributes to the overall uncertainty:
\begin{align}
\mathbb{E}[\hat{T}] &= \mathbb{E}[\hat{N}_p] \mu_X = \frac{m}{p} \mu_X, \\
\mathrm{Var}(\hat{T}) &= \mathbb{E}[\hat{N}_p] \sigma_X^2 + \mathrm{Var}(\hat{N}_p) \mu_X^2 = \frac{m}{p} \sigma_X^2 + \frac{m(1-p)}{p^2} \mu_X^2.
\end{align}
The variance decomposition is crucial: the first term (\(\frac{m}{p} \sigma_X^2\)) is the average variability from the runtimes themselves, while the second term (\(\frac{m(1-p)}{p^2} \mu_X^2\)) is the additional variability introduced solely by the randomness of the count \(\hat{N}_p\).

\paragraph*{Distributional Characteristics and Contrast with \(\hat{N}_p\)}
The distribution of \(\hat{T}\) is not a simple negative binomial; it is the convolution of a negative binomial number of runtime distributions. Its shape is fundamentally governed by the distribution of the individual runtimes \(X_j\). This leads to a key empirical observation:

For \(m=1\), the distribution of \(\hat{N}_p\) is geometric and peaks sharply at \(k=1\). However, the distribution of \(\hat{T}\) does not peak at the theoretical minimum runtime. Instead, it peaks at a small but non-zero value (e.g., \(t=3\) ticks). This occurs because the runtime distribution \(X_j\) has its own most probable value greater than zero. The event \(\{\hat{N}_p=1, X_1 = 3\}\) has the highest joint probability, forming the observed peak. The distribution remains right-skewed, but its tail is a function of both the geometric tail of \(n_p\) and the tail of the runtime distribution.

As \(m\) increases, the distribution of \(\hat{N}_p\) broadens. Consequently, the distribution of \(\hat{T}\) evolves from the sharp, discrete-like shape observed for \(m=1\) into a smooth, right-skewed distribution. For \(m = 2, 3, 4\), \(\hat{T}\) exhibits a characteristic right skew, with a peak that shifts right in proportion to \(m\), consistent with the scaling of \(\mathbb{E}[\hat{T}]\). The convolution with the runtime distribution \(X_j\) smooths the distribution and extends its right tail, in agreement with the empirical histograms.

\paragraph*{Extreme Value Behavior and Worst-Case Analysis}
The tail probability \(\Pr(\hat{T} > t)\) is paramount for latency guarantees. The compound nature of \(\hat{T}\) means extreme values can arise from two distinct, but not mutually exclusive, mechanisms:
\begin{enumerate}
    \item \textbf{Many Attempts:} The event \(\{\hat{N}_p = k\}\) for a large \(k \gg \mathbb{E}[\hat{N}_p]\), which has a non-negligible probability under the negative binomial law. This contributes a tail that decays algebraically.
    \item \textbf{Exceptionally Long Runtimes:} The event that the sum of \(k\) runtimes is large, which can occur for any \(k\), including \(k \approx \mathbb{E}[\hat{N}_p]\).
\end{enumerate}
The overall tail is a convolution of these two effects, formalized by the lower bound for any \(k\):
\begin{equation}\label{eq:long-tail}
\Pr(\hat{T} > t) \geq \Pr(\hat{N}_p = k) \cdot \Pr\left(\sum_{j=1}^k X_j > t \right).
\end{equation}
For \(m=1\), the runtime distribution \(X_j\) is highly concentrated, which suppresses the tail of \(\hat{T}\) relative to \(\hat{N}_p\); the most likely way to get a large \(T\) is solely through a large \(n_p\). However, for \(m>1\), the distribution of \(\hat{T}\) develops a heavier tail than \(\hat{N}_p\) alone. This is because the convolution of the negative binomial distribution with the runtime distribution smooths and extends the tail, and, more importantly, the worst-case performance is driven by the \emph{synergy} of both mechanisms: the joint occurrence of a large number of permutations \emph{and} one or more long runtimes within that sequence. This "perfect storm" scenario, while rare, is the primary contributor to the extreme values observed in the empirical distribution of \(\hat{T}\) and must be accounted for in worst-case latency analysis.

\subsubsection{Modular Uniformization of RPSS Observables}
\label{sec:modular-unif}

Both permutation count $n_p$ and elapsed time $T$ exhibit complex, system-dependent distributions with significant right-skewness. However, for cryptographic applications, we can transform these raw observables into uniformly distributed $n$-bit symbols through modular reduction.

\paragraph*{Mathematical Framework}
Let $Y$ represent either observable ($n_p$ or $T$). For cryptographic output, we reduce $Y$ modulo $R = 2^n$ to obtain an $n$-bit symbol:
\begin{equation}
\tilde{Y} = Y \bmod R, \qquad 0 \le r < R.
\end{equation}
The probability of each residue is the sum of the probabilities of all real numbers in that equivalence class:
\begin{equation}
\Pr[\tilde{Y} = r] = \sum_{k=0}^{\infty} \Pr[Y = r + kR].
\end{equation}

\paragraph*{Uniformization Mechanism}
The modular reduction acts as an averaging operator over the distribution. When the characteristic scale $M$ of $Y$ is much larger than $R$ ($M \gg R$), the sum over all equivalence classes $r + kR$ for fixed $r$ samples the distribution at nearly constant intervals. If the distribution is slowly varying over scales of order $R$, this averaging produces approximately equal probabilities for all residues:
\begin{equation}
\Pr[\tilde{Y} = r] \approx \frac{1}{R}, \quad 0 \le r < R.
\end{equation}

\paragraph*{Application to Permutation Count}
For $\tilde{n}_p = n_p \bmod R$, the characteristic scale is $M = \mathbb{E}[n_p] = m \cdot N!$. Following the detailed derivation in Appendix~\ref{app:modular-derivation}, we obtain the asymptotic uniformity result:
\[
\Pr[\tilde{n}_p = r] = \frac{1}{R} + O\left(\frac{1}{M}\right), \qquad 0 \le r < R.
\]
The uniformity arises from averaging the slowly varying negative binomial tail over all equivalence classes modulo $R$.

\paragraph*{Application to Elapsed Time}
For $\tilde{T} = T \bmod R$, the analysis is more complex due to the compound nature of $T$. However, empirical results show that the same uniformization principle applies. When the expected value of $T$ is much larger than $R$, modular reduction produces approximately uniform residues:
\[
\Pr[\tilde{T} = t] \approx \frac{1}{2^n} + O\left(\frac{1}{\mathbb{E}[\hat{T}]}\right), \quad 0 \le t < 2^n.
\]
where \(\mathbb{E}[\hat{T}]\) is the mean of the elapsed time. This occurs because the distribution of $T$, while system-dependent, is typically slowly varying over scales larger than $R$ when $\mathbb{E}[T] \gg R$.

\paragraph*{Joint Uncertainty and Cryptographic Significance}
The modular reduction of both $\hat{N}_p$ and $\hat{T}$ ensures that their joint uncertainty translates into near-uniform distributions for $\tilde{n}_p$ and $\tilde{T}$. This joint uncertainty is the cornerstone of RPSS's self-converged True Uniform Random Number Generator (TURNG) behavior, enabling the system to achieve synchronous convergence and produce cryptographically strong, self-stabilizing random bits.

\paragraph*{The Dual Requirement: Scale and Shape}
The condition $M \gg R$ is necessary but not sufficient to ensure uniformity. The distribution must also be slowly varying over intervals of length $R$:
\begin{itemize}
    \item For $m=1$, the geometric distribution is highly skewed, requiring $M \gg R$ (large $N$)
    \item For $m>1$, the negative binomial distribution is more symmetric, allowing smaller $N$ with larger $m$
    \item For elapsed time $T$, the compound distribution is typically slowly varying when $\mathbb{E}[T] \gg R$
\end{itemize}

\paragraph*{Performance Considerations}
In practice, uniformity can be achieved by either:
\begin{enumerate}
\item Increasing $N$ (super-exponential increase in $M$)
\item Increasing $m$ (linear increase in $M$ with better distribution shape)
\end{enumerate}
For fixed computational cost, increasing $m$ is often more effective, as it improves the distribution shape while maintaining manageable permutation sizes.

\subsubsection{Entropy Analysis and Convergence of the RPSS}

The RPSS system generates entropy through stochastic permutation counts $\hat{N}_p$ and elapsed times \(\hat{T}\). The Shannon entropy associated with these observables in Eq.~\eqref{eq:dist-reduce} is
\begin{equation}
\begin{aligned}
    H(\hat{\mathcal{N}}_p) &= - \sum_{\tilde{n}_p} |\beta_{\tilde{n}_p}|^2 \log_2 |\beta_{\tilde{n}_p}|^2, \\
    H(\hat{\mathcal{T}}) &= - \sum_{\tilde{t}} |\gamma_{\tilde{t}}|^2 \log_2 |\gamma_{\tilde{t}}|^2.
\end{aligned}
\end{equation}
For an \(n\)-bit entropy source, both \( H(\hat{N}_p)\) and \(H(\hat{T})\) for modular-reduced samples must be very close to \(n\) bits.

\paragraph*{Entropy Convergence via $N$ and $m$}
Increasing the array size $N$ (hence $M = N!$) or the repetition parameter $m$ raises the raw source entropy:
\begin{equation}
\log_2(m \cdot N!) \gg n \quad \rightarrow \quad \log_2(m \cdot N!) > n + 2
\end{equation}
ensuring that the $n$-bit computational basis is well covered. This regime is referred to as \emph{entropy convergence}, where the system's raw distributions are sufficiently broad to produce nearly uniform modular-reduced outputs.

\paragraph*{Statistical Validation}

We validate entropy convergence of the RPSS system using three complementary criteria, with particular attention to the role of the Central Limit Theorem (CLT):

\begin{enumerate}
    \item \textbf{Min-entropy:}
    The modular-reduced distributions $\tilde{n}_p$ and $\tilde{T}$ satisfy the NIST SP 800-90B IID requirements. This confirms that the worst-case unpredictability is cryptographically sufficient, independent of higher-order distributional details.

    \item \textbf{Chi-square goodness-of-fit:}  
    The modular-reduced distributions are compared with the ideal uniform law. The observed \(\chi^2\) values converge toward their theoretical expectation as $N$ or $m$ increase, supporting the claim that sample frequencies approach uniformity.  

    \item \textbf{Central Limit Theorem (CLT) validation:}  
    The CLT does not apply to the raw observables $n_p$ and $T$, whose heavy-tailed distributions are far from normal. Instead, it applies to the \emph{modular-reduced} observables $\tilde{n}_p$ and $\tilde{T}$.  

    Under the uniform hypothesis, the frequency counts across $2^n$ bins should deviate from their expected value according to a Gaussian law centered at zero. Empirical deviations exhibit this bell-shaped distribution, with most falling within the $1\sigma$--$3\sigma$ bounds predicted by the CLT.  

    In particular:  
\begin{itemize}  
    \item For $\tilde{n}_p$, accumulated waiting times smooth the distribution; after modular reduction, residual fluctuations in bin counts are Gaussian, consistent with uniformity.  
    \item For $\tilde{T}$, the compound sum structure yields Gaussian deviations in modular-reduced bin counts, again supporting uniformity.  
\end{itemize}  
\end{enumerate}

Together, these three criteria—min-entropy, $\chi^2$ fit, and CLT-consistent Gaussian residuals—are used here as academic demonstrations to confirm the theoretical framework. Comprehensive benchmarking and stress testing have been reported previously, allowing this work to focus on the physical mechanism and its mathematical validation.

\subsection{Self-Converged TURNG}
\label{sec:turng}

The preceding sections established the statistical laws governing the RPSS observables \(\hat{N}_p\) and \(\hat{T}\), demonstrating that their modular reductions, \(\tilde{N}_p\) and \(\tilde{T}\), yield near-uniform \(n\)-bit distributions under entropy convergence (\(M \gg R\)). We now describe how these properties are leveraged to construct a self-sustaining, cryptographically strong random number generator: QPP-RNG as a TURNG. This model provides the theoretical foundation for the software implementation described in the introduction.

A fundamental assumption is that the Quantum Permutation Pad (QPP) is generated by a deterministic PRNG (e.g., an LCG) initialized with a secret random seed. The system operates in a cyclic fashion:
\begin{enumerate}
    \item \textbf{Entropy Injection:} The modular-reduced elapsed time \(\tilde{T}\) (system jitter) from the current sorting cycle is used to reseed the LCG. A lightweight, non-destructive operation preserves state continuity while injecting fresh entropy:
    \begin{equation}
    \text{seed}_{\text{new}} = (\text{seed}_{\text{old}} \ll k) + \tilde{T},
    \end{equation}
    where $\ll k$ denotes a bitwise shift.
    \item \textbf{Pad Generation:} The updated LCG generates the next QPP pad.
    \item \textbf{Sorting \& Output:} The new pad performs random permutation sorting on the disordered array. The resulting permutation count \(n_p\) is reduced modulo \(2^n\) to produce the final TURNG output:
    \begin{equation}
    \tilde{n}_p = n_p \bmod 2^n.
    \end{equation}
\end{enumerate}

The cornerstone of this construction is the \emph{synchronous convergence} of \(\tilde{N}_p\) and \(\tilde{T}\). The entropy from \(\tilde{T}\) ensures the continued unpredictability of the QPP pad sequence, which in turn guarantees that the output \(\tilde{n}_p\) remains uniformly distributed. This creates a positive feedback loop: the entropy source reinforces itself, leading to a \emph{self-converged} TURNG that stabilizes without external intervention and produces cryptographically suitable random bits.

\section{Experiments and Discussions}
\label{sec:experiments}
TURNG using RPSS has been benchmarked cross major computing platforms such as Windows with X86 processor, macOS with X86 process and ARM, and Rassberry Pi in our recent publication~\cite{qpp-rng-sci-kuang-2025} and mobile platforms: iOS and Android~\cite{kuang2025mobile}, against NIST SP 800-90B for IID. In this paper, we mainly focus on experimental demonstrations of its mechanism characteristics.

Experiments were performed on the following hardware and software platform summarized in Table~\ref{tab:platform}.

\begin{table}[H]
\centering
\caption{Experimental platform: an RPSS system}
\label{tab:platform}
\begin{tabular}{c c c}
\toprule
\textbf{Item}   & \textbf{Details} \\ \midrule
Device      & Device-L038-SMS \\
CPU         & Intel i5-1240P (1.7 GHz) \\
RAM         & 16 GB \\
System      & 64-bit, x64-based \\
OS          & Windows 11 Pro, 24H2 \\
OS Build    & 26100.4652 \\
Java Impl.  & QPP-RNG: RPSS model \\
\bottomrule
\end{tabular}
\end{table}

In this system, the permutation sorting time is in ticks. We will demonstrate the distributions for both permutation count and elapsed time to show their conjugated property in Section~\ref{sec:measure-conjugate}, their raw distribution transition against the sorting repetition \(m\) in Section~\ref{sec:raw-distributions}, the modular-reduced distribution in Section~\ref{sec:modular-disribution}, finally the true uniform random number generator or TURN in Section~\ref{sec:turng}.

\subsection{Measurements of the Conjugate Observables}
\label{sec:measure-conjugate}

The RPSS framework introduces two fundamental, \textbf{conjugate observables}: the permutation count, $n_p$, and the elapsed time, $T$. While $n_p$ represents the number of permutations required for the $m$-th success (here $m=1$), $T$ captures the actual runtime, which includes system-level jitter. Their conjugate relationship emerges from the intrinsic trade-off between computational effort ($n_p$) and temporal execution ($T$).

This relationship is empirically demonstrated in Table~\ref{tab:elapsed_times}. Each row corresponds to a specific QPP pad ($QPP_1$ to $QPP_{26}$) with a \textbf{deterministic} $n_p$. The five columns of elapsed time data show the unpredictable variability of system jitter across independent runs for the same set of pads.

\begin{table}[ht]
\centering
\caption{Elapsed time measurements for deterministic QPP pads with a disordered array \(\{3, 2, 0, 1\}\) and \(m=1\). The fourth elapsed time column demonstrates a dramatic slowdown, likely due to system resource contention.}
\label{tab:elapsed_times}
\footnotesize
\begin{tabular}{c c c c c c c}
\toprule
QPP Pad & $n_p$ & \multicolumn{5}{c}{Elapsed Time (ticks)} \\
\cmidrule(l){3-7}
 & & Run 1 & Run 2 & Run 3 & \textbf{Run 4*} & Run 5 \\
\midrule
$QPP_1$ & 5 & 22 & 19 & 22 & 124 & 20 \\
$QPP_2$ & 8 & 45 & 45 & 41 & 301 & 46 \\
$QPP_3$ & 3 & 15 & 15 & 15 & 106 & 16 \\
$QPP_4$ & 5 & 14 & 15 & 14 & 117 & 14 \\
$QPP_5$ & 77 & 297 & 226 & 273 & 1794 & 237 \\
$QPP_6$ & 5 & 15 & 17 & 14 & 654 & 21 \\
$QPP_7$ & 20 & 42 & 34 & 36 & 314 & 35 \\
$QPP_8$ & 8 & 19 & 17 & 19 & 172 & 17 \\
$QPP_9$ & 28 & 47 & 55 & 66 & 363 & 57 \\
$QPP_{10}$ & 45 & 69 & 80 & 89 & 518 & 74 \\
$QPP_{11}$ & 41 & 83 & 89 & 82 & 609 & 76 \\
$QPP_{12}$ & 46 & 91 & 89 & 118 & 520 & 99 \\
$QPP_{13}$ & 32 & 55 & 52 & 63 & 473 & 55 \\
$QPP_{14}$ & 23 & 117 & 48 & 39 & 297 & 37 \\
$QPP_{15}$ & 12 & 16 & 25 & 23 & 431 & 19 \\
$QPP_{16}$ & 6 & 9 & 12 & 13 & 79 & 11 \\
$QPP_{17}$ & 2 & 5 & 7 & 8 & 58 & 6 \\
$QPP_{18}$ & 8 & 10 & 16 & 15 & 92 & 15 \\
$QPP_{19}$ & 22 & 19 & 37 & 70 & 168 & 31 \\
$QPP_{20}$ & 15 & 12 & 29 & 25 & 123 & 30 \\
$QPP_{21}$ & 1 & 4 & 7 & 6 & 52 & 6 \\
$QPP_{22}$ & 46 & 35 & 69 & 45 & 287 & 69 \\
$QPP_{23}$ & 1 & 4 & 103 & 4 & 50 & 4 \\
$QPP_{24}$ & 11 & 10 & 18 & 16 & 107 & 18 \\
$QPP_{25}$ & 2 & 5 & 7 & 7 & 56 & 6 \\
$QPP_{26}$ & 6 & 8 & 9 & 10 & 78 & 53 \\
\bottomrule
\end{tabular}
\end{table}

This conjugation is highlighted by three key observations:
\begin{enumerate}
    \item \textbf{Stochastic Temporal Variability:} For a fixed pad (each row), the elapsed time varies significantly across runs. For example, QPP pad $QPP_6$ ($n_p=5$) shows a runtime ranging from 14 to 654 ticks, a nearly 47$\times$ difference.
    \item \textbf{Stochastic Algorithmic Variability:} Across a fixed run (each column), pads with similar algorithmic complexity (similar $n_p$) can have drastically different runtimes. For instance, the two pads with $n_p=5$ ($QPP_1$ and $QPP_6$) show very different execution times, differing by factors of 1.5--46.6 across the runs.
    \item \textbf{System-Algorithmic Coupling:} The dramatic slowdown in Run 4 demonstrates how system-level factors (e.g., CPU load, I/O delays) \textbf{conjugate} with the algorithmic complexity to produce the final, unpredictable runtime.
\end{enumerate}

This conjugate relationship fundamentally stems from the compound nature of the elapsed time observable:
\[
T = \sum_{j=1}^{n_p} X_j
\]
where $X_j$ is the runtime of individual permutations, which includes system-level stochasticity. The mathematical conjugation arises because $n_p$ and $T$ are related but non-commuting observables; precise knowledge of one necessarily introduces uncertainty in the other.

The observed variability underscores the cryptographic strength of the RPSS approach. Even with complete knowledge of the QPP pads and their permutation counts, an adversary cannot predict execution times due to this conjugate relationship. The fact that both modular-reduced $n_p$ and $T$ converge to uniform distributions---despite such dramatic variations---is the core principle enabling the self-convergent random number generation of the RPSS framework.

\begin{table}[ht]
\centering
\caption{Modular-reduced values ($\mod 16$) of permutation count and elapsed time. This transformation reveals the underlying uniformity and enhanced cryptographic utility of the conjugate observables.}
\label{tab:mod16_times}
\footnotesize
\begin{tabular}{c c c c c c c}
\toprule
QPP Pad & $\tilde{n}_p$ & \multicolumn{5}{c}{$\widetilde{\text{Elapsed Time}}$ ($\mod 16$)} \\
\cmidrule(l){3-7}
 & $(n_p \mod 16)$ & Run 1 & Run 2 & Run 3 & Run 4 & Run 5 \\
\midrule
$QPP_1$ & 5 & 6 & 3 & 6 & 12 & 4 \\
$QPP_2$ & 8 & 13 & 13 & 9 & 13 & 14 \\
$QPP_3$ & 3 & 15 & 15 & 15 & 10 & 0 \\
$QPP_4$ & 5 & 14 & 15 & 14 & 5 & 14 \\
$QPP_5$ & 13 & 9 & 2 & 1 & 2 & 13 \\
$QPP_6$ & 5 & 15 & 1 & 14 & 14 & 5 \\
$QPP_7$ & 4 & 10 & 2 & 4 & 10 & 3 \\
$QPP_8$ & 8 & 3 & 1 & 3 & 12 & 1 \\
$QPP_9$ & 12 & 15 & 7 & 2 & 3 & 9 \\
$QPP_{10}$ & 13 & 5 & 0 & 9 & 6 & 10 \\
$QPP_{11}$ & 9 & 3 & 9 & 2 & 1 & 12 \\
$QPP_{12}$ & 14 & 11 & 9 & 6 & 8 & 3 \\
$QPP_{13}$ & 0 & 7 & 4 & 15 & 9 & 7 \\
$QPP_{14}$ & 7 & 5 & 0 & 7 & 9 & 5 \\
$QPP_{15}$ & 12 & 0 & 9 & 7 & 15 & 3 \\
$QPP_{16}$ & 6 & 9 & 12 & 13 & 15 & 11 \\
$QPP_{17}$ & 2 & 5 & 7 & 8 & 10 & 6 \\
$QPP_{18}$ & 8 & 10 & 0 & 15 & 12 & 15 \\
$QPP_{19}$ & 6 & 3 & 5 & 6 & 8 & 15 \\
$QPP_{20}$ & 15 & 12 & 13 & 9 & 11 & 14 \\
$QPP_{21}$ & 1 & 4 & 7 & 6 & 4 & 6 \\
$QPP_{22}$ & 14 & 3 & 5 & 13 & 15 & 5 \\
$QPP_{23}$ & 1 & 4 & 7 & 4 & 2 & 4 \\
$QPP_{24}$ & 11 & 10 & 2 & 0 & 11 & 2 \\
$QPP_{25}$ & 2 & 5 & 7 & 7 & 8 & 6 \\
$QPP_{26}$ & 6 & 8 & 9 & 10 & 14 & 5 \\
\bottomrule
\end{tabular}
\end{table}

The profound conjugate relationship between $n_p$ and $T$ is further crystallized by examining their modular reductions, $\tilde{n}_p = n_p \mod 16$ and $\tilde{T} = T \mod 16$, as shown in Table~\ref{tab:mod16_times}. The operation of modular reduction acts as a cryptographic extractor, transforming the raw, system-dependent unpredictability into a near-uniform distribution over the residue classes $\mathbb{Z}_{16}$.

While the raw data in Table~\ref{tab:elapsed_times} shows some predictable correlation---e.g., larger $n_p$ generally leads to larger $T$---this relationship is completely obscured in the modular domain. The values of $\tilde{n}_p$ and $\tilde{T}$ exhibit no obvious correlation to each other or to their pre-image values. For instance:

\begin{itemize}
    \item A large $n_p$ value like 77 ($QPP_5$) reduces to $\tilde{n}_p=13$, while a small $n_p=3$ ($QPP_3$) reduces to $\tilde{n}_p=3$. Their reduced elapsed times are distributed across the entire range of values (0 to 15) with no apparent pattern.
    \item Identical $\tilde{n}_p$ values can produce wildly different $\tilde{T}$ values. Pads with $\tilde{n}_p=5$ ($QPP_1, 4, 6$) yield $\tilde{T}$ values of $\{6, 3, 6, 12, 4\}$, $\{14, 15, 14, 5, 14\}$, and $\{15, 1, 14, 14, 5\}$ respectively. The entropy is preserved and maximized.
\end{itemize}

This demonstrates the core mechanism of the RPSS TURNG: the conjugate observables $n_p$ and $T$ provide a dual source of entropy. Their joint state is high-dimensional and unpredictable, but it is the final modular reduction that `flattens' this high-entropy state into a uniform, cryptographic-grade output, breaking any residual linear relationships and ensuring the bits are truly random and independent.

\subsection{Raw Distributions of the Conjugate Observables}\label{sec:raw-distributions}

In Section~\ref{sec:model}, we modeled the RPSS system within a statistical quantum mechanics framework. In this model, random permutation sorting acts as a transformation that converts an initial disordered state \(\ket{\{c_i\}}\) into a stochastic superposition state. This state may be expressed either in terms of permutation count eigenstates \(\ket{n_p, \{a_i\}}\) or elapsed-time eigenstates \(\ket{t, \{a_i\}}\), with probability amplitudes characterized by generally right-skewed distributions.

Figure~\ref{fig:raw_n_p} presents the permutation count distribution produced by a deterministic pseudorandom number generator—specifically, a fixed-seed linear congruential generator (LCG PRNG)~\cite{kuang2025-QPP-RNG}. Our analysis is based on measurements of permutation counts and elapsed times (in clock ticks) collected over one million trials. The corresponding elapsed-time distribution is shown in Fig.~\ref{fig:raw_t}.

For \(m = 1\), the permutation count distribution follows an exponential decay with a pronounced peak corresponding to a single permutation required to complete the sorting cycle. The observed probability at this peak is 41.3\%, which closely matches the theoretical value of 41.67\%. As the number of permutations increases, the probability decays exponentially. When the sorting process is extended to the \(m\)-th successful sort, the distribution transitions into a right-skewed form well approximated by a negative binomial distribution. As \(m\) increases, the peak shifts to the right and the tail lengthens accordingly.

\begin{figure}[ht]
   \centering
   \includegraphics[scale=0.3]{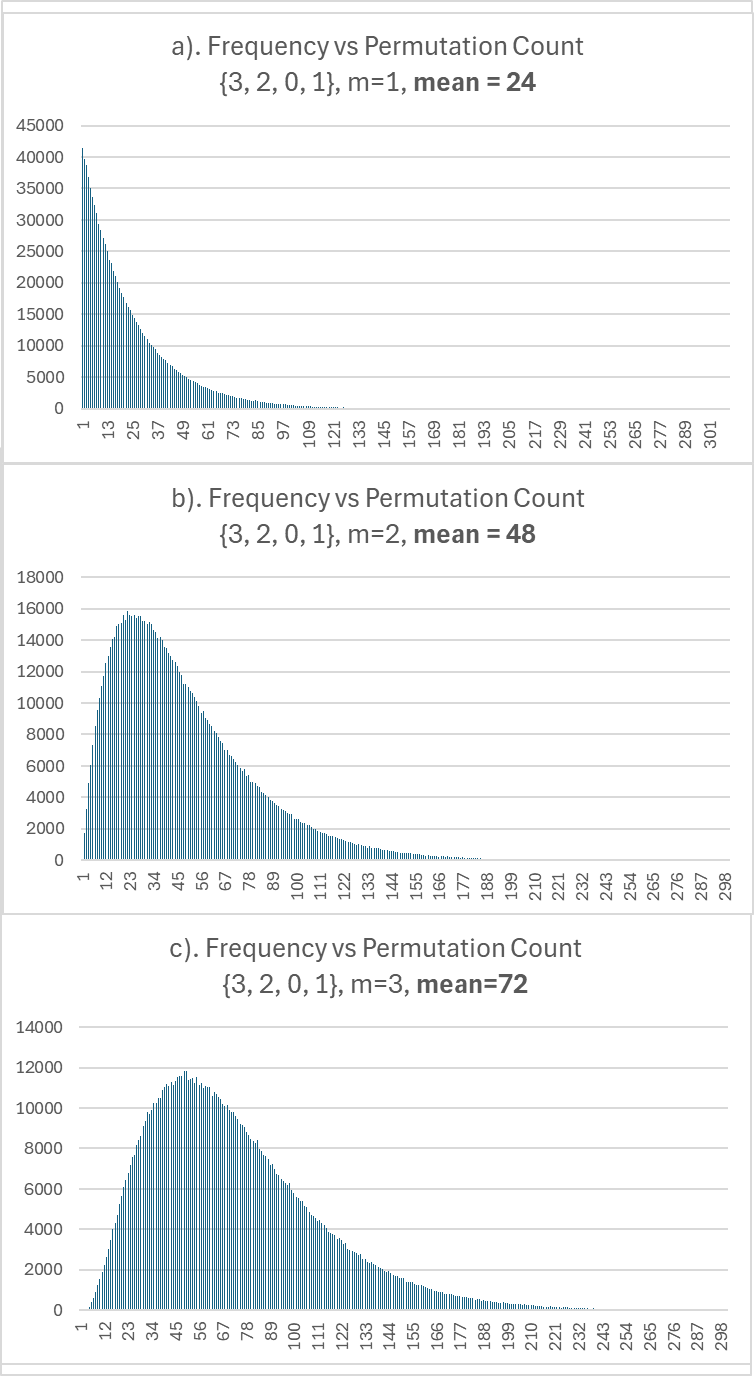}
   \caption{Raw permutation count distributions for sorting the disordered array \(\{3, 2, 0, 1\}\) at repetition values \(m = 1, 2, 3\).}
   \label{fig:raw_n_p}
\end{figure}

\begin{figure}[ht]
   \centering
   \includegraphics[scale=0.3]{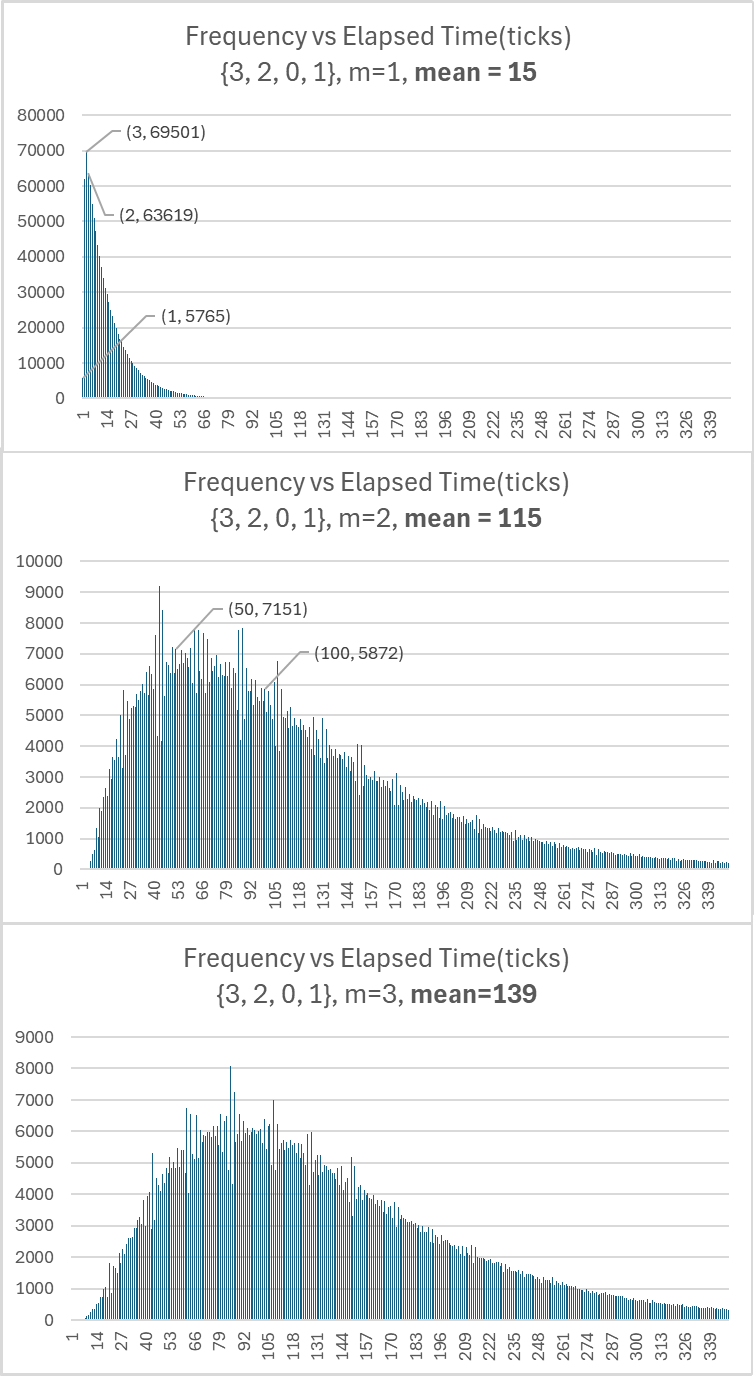}
   \caption{Raw elapsed-time distributions (in ticks) for sorting the disordered array \(\{3, 2, 0, 1\}\) at repetition values \(m = 1, 2, 3\).}
   \label{fig:raw_t}
\end{figure}

The elapsed-time distribution in Fig.~\ref{fig:raw_t} exhibits similar overall behavior to the permutation count distribution. For \(m = 1\), a sharp peak occurs at \(t = 3\) ticks, though non-negligible probabilities also appear at \(t = 1\) and \(2\) ticks. This deviation from the permutation count distribution reflects the compounded effects captured by the joint probability distribution in Eq.~\eqref{eq:t}. The mean elapsed time is 15 ticks, smaller than the mean permutation count \(M = mN! = 24\). Unlike the permutation count distribution—which is deterministic for fixed permutation pads—the elapsed-time distribution varies between runs due to system dynamics.

As \(m\) increases, both distributions undergo comparable transitions. The elapsed-time distributions are right-skewed and display sharp fluctuations indicative of runtime jitter. Compared to the permutation count distributions, the elapsed-time distributions are flatter and possess longer tails, consistent with the compounded contributions from all permutation counts, as predicted by Eq.~\eqref{eq:long-tail}. For example, the mean elapsed time is 115 ticks for \(m = 2\) and 139 ticks for \(m = 3\), both substantially larger than their corresponding permutation count means.

Extending to \(m=4\) produces a mean of \(M=mN!=96\). As expected, both the permutation count and elapsed-time distributions continue to exhibit right-skewed profiles, with their peaks and means shifting progressively rightward.

In our experiments for \(N=4\), the observed modes of the permutation count distributions were 1 for \(m=1\), 22 for \(m=2\), 48 for \(m=3\), and 71 for \(m=4\). The theoretical mode for \(m > 1\) is given by Eq.~\eqref{eq:mode}. Table~\ref{tab:modes} compares the observed and theoretical modes for \(m = 1, 2, 3, 4\):

\begin{table}[ht]
\centering
\caption{Comparison of observed and theoretical modes of the permutation count distribution for \(N=4\).}
\label{tab:modes}
\begin{tabular}{c c c}
\toprule
\(m\) & Observed Mode & Theoretical Mode (Eq.~\eqref{eq:mode}) \\
\midrule
1 & 1  & 1  \\
2 & 22 & 24 \\
3 & 48 & 47 \\
4 & 71 & 70 \\
\bottomrule
\end{tabular}
\end{table}

The small discrepancies for \(m=2\) and \(m=3\) are due to the right-skewness of the distribution and the discrete nature of \(n_p\), which slightly shifts the empirical peak relative to the theoretical value. Overall, these results confirm that the negative binomial framework accurately models the RPSS permutation count distribution, capturing the progressive rightward shift of the mode and the elongation of the tail as \(m\) increases.

\subsection{Modular-Reduced Distributions of the Conjugate Observables}\label{sec:modular-disribution}

We now examine the modular-reduced distributions for both permutation count and elapsed time, observing a remarkable transition toward uniform distributions for both conjugate observables.

As established in Section~\ref{sec:modular-unif}, when the mean $M \gg R = 2^n$ for an $n$-bit entropy generator, the modular-reduced distributions of both conjugate observables converge to uniformity. We demonstrate this phenomenon using the configuration $N = 4$ and $m = 3$, with results shown in Figs.~\ref{fig:mod_n_p} and~\ref{fig:mod_t}.

\begin{figure}[ht]
   \centering
   \includegraphics[scale=0.3]{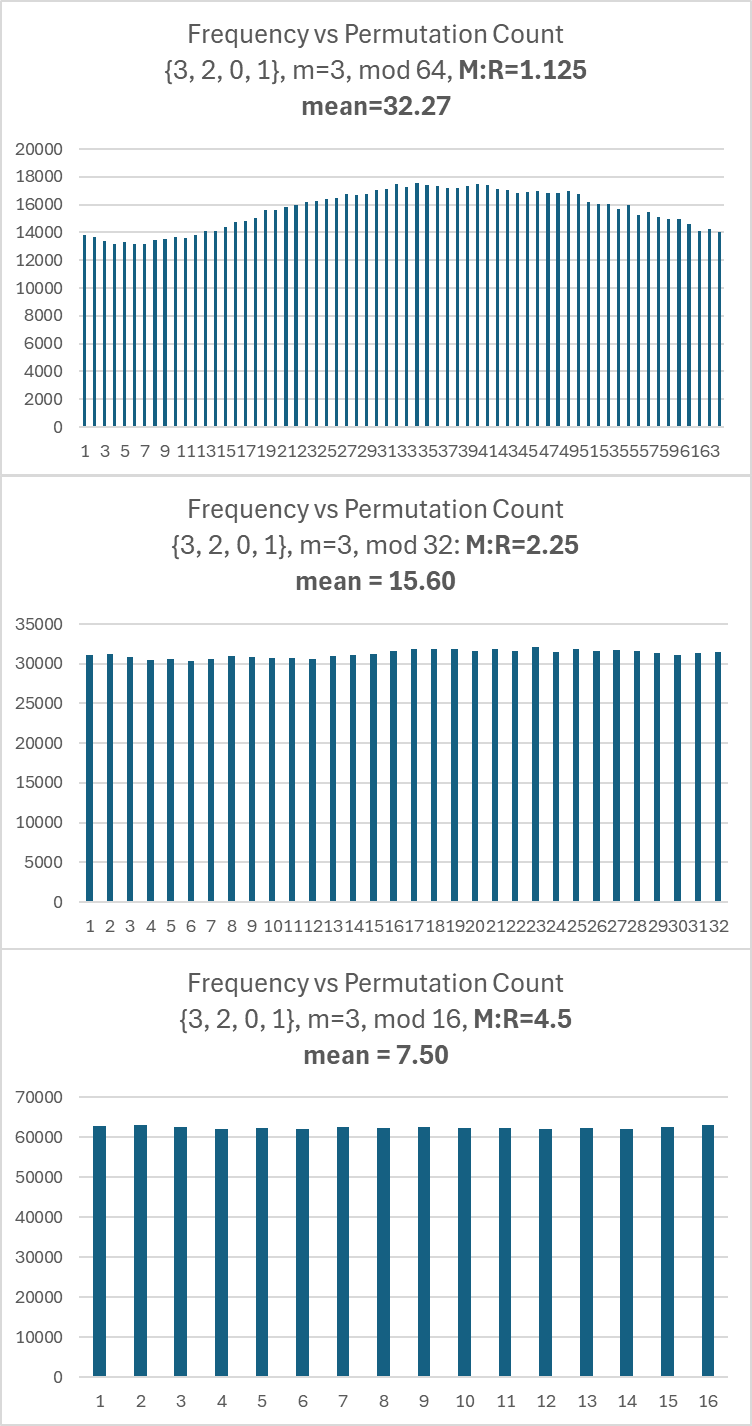}
   \caption{Modular-reduced permutation count distributions for sorting the disordered array $\{3, 2, 0, 1\}$ at repetition value $m = 3$.}
   \label{fig:mod_n_p}
\end{figure}

For $n = 6$, the modular-reduced permutation count distribution in Fig.~\ref{fig:mod_n_p} exhibits visually detectable non-uniformity across the 64 samples. This occurs because the ratio $\frac{M}{R} = 1.125$ does not satisfy the condition $M \gg R$. The same permutation counts demonstrate improved uniformity when reduced modulo 32 (i.e., for a 5-bit generator), though slight non-uniformity remains visible. In this case, the ratio $\frac{M}{R} = 2.25$ represents a significant improvement. The calculated arithmetic mean of 15.6 is very close to the ideal value of 15.50. For $m = 3$, the distribution approaches visual uniformity, and the arithmetic mean converges toward the ideal value of 7.50.

The modular-reduced elapsed time distribution in Fig.~\ref{fig:mod_t} shows similar behavior to that of the permutation count. Although the ratio $\frac{E(T)}{R} = 2.17$ is more favorable than for the permutation count, the distribution retains some non-uniformity due to the compounding effect of permutation counts. When projected onto a 5-bit basis ($n = 5$), the elapsed time distribution shows improved uniformity with an arithmetic mean of 15.49, closely matching the ideal value of 15.50, consistent with observations for the permutation count distribution. For a 4-bit generator, both elapsed time and permutation count distributions exhibit uniform characteristics.

\begin{figure}[ht]
   \centering
   \includegraphics[scale=0.3]{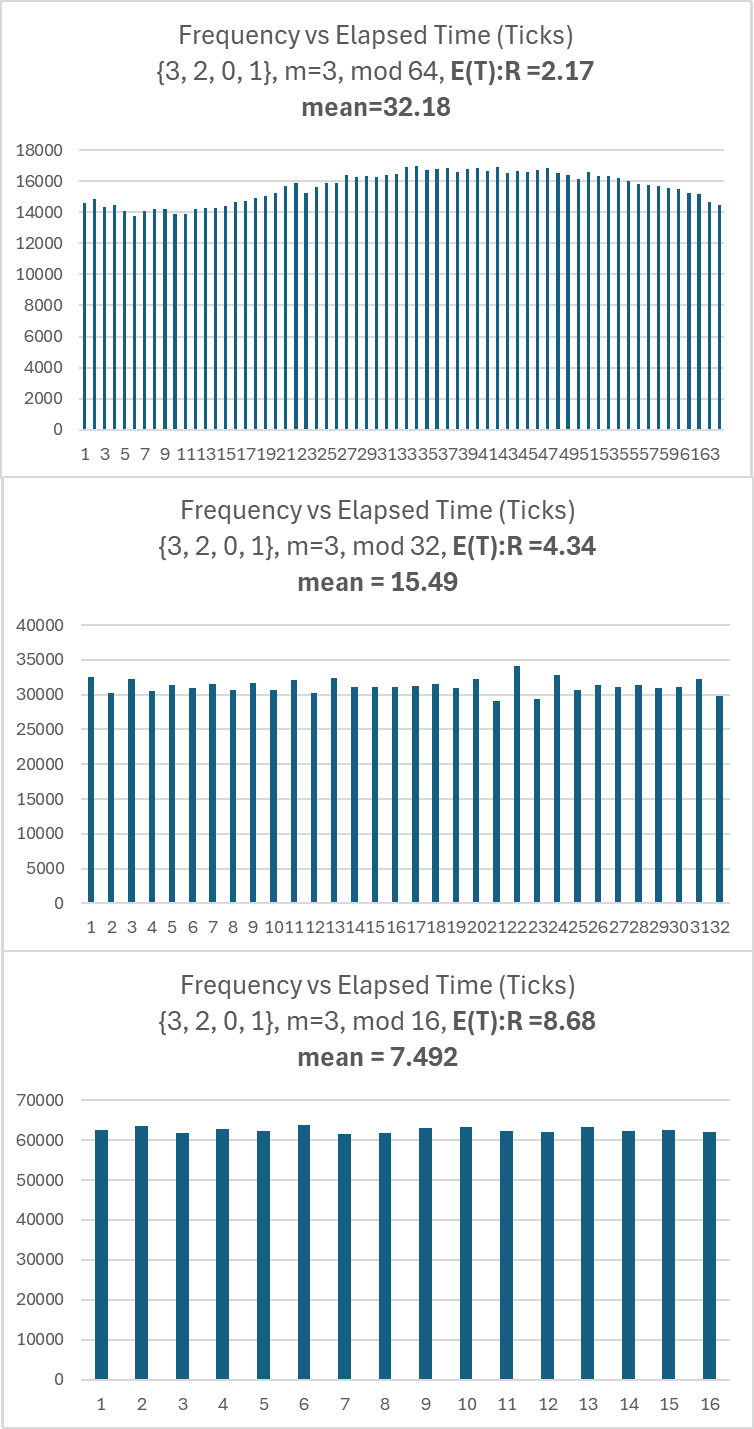}
   \caption{Modular-reduced elapsed-time distributions (in ticks) for sorting the disordered array $\{3, 2, 0, 1\}$ at repetition value $m = 3$.}
   \label{fig:mod_t}
\end{figure}

Thus, synchronous convergence to uniform distributions is achieved for $N = 4$ and $m = 3$. For enhanced uniformity, we recommend using $N = 4$ with $m \geq 4$. Table~\ref{tab:convergence} summarizes the convergence behavior for different values of $N$ and $m$.

\begin{table}[ht]
\centering
\caption{Convergence to uniformity for different array sizes ($N$) and repetition numbers ($m$).}
\label{tab:convergence}
\begin{tabular}{ccccl}
\hline
$N$ & $m$ & $M = m \cdot N!$ & $R = 2^n$ & Uniformity \\
\hline
3 & 15 & 90  & $n=4$ & Near \\
3 & 20 & 120 & $n=4$ & Essential \\
4 & 3  & 72  & $n=4$ & Near \\
4 & 4  & 96  & $n=4$ & Essential \\
5 & 2  & 240 & $n=4$ & Essential \\
5 & 4  & 480 & $n=8$ & Near    \\
5 & 5  & 600 & $n=8$ & Essential \\
\hline
\end{tabular}
\end{table}

\paragraph{Clarification of Near-Uniform and Essentially Uniform Behavior}

To remove ambiguity in terminology, we classify QPP-RNG outputs using the NIST SP 800-90B MCV min-entropy estimator. Since min-entropy lower-bounds Shannon entropy, it provides a conservative and reliable measure of uniformity. A stream is considered \textbf{essentially uniform} if its estimated min-entropy exceeds 7.99 bits/byte, and \textbf{near-uniform} if it exceeds 7.95 bits/byte. Supporting statistics, such as the arithmetic mean and serial correlation, are reported for diagnostic purposes but are not used for classification.

\begin{table}[ht]
\centering
\caption{Min-entropy-based uniformity metrics for the 4-element QPP-RNG array.}
\label{tab:uniformity_metrics}
\begin{tabular}{ccccc}
\hline
$m$ & $N_s$ & Elapsed Time & Permutation & Uniformity Class \\
\hline
4 & $10^6$  & 7.932 & 7.931 & Near \\
4 & $10^7$  & 7.977 & 7.981 & Essential \\
4 & $10^8$  & 7.994 & 7.994 & Essential \\
3 & $10^6$  & 7.905 & 7.933 & Near \\
3 & $10^7$  & 7.973 & 7.975 & Essential \\
3 & $10^8$  & 7.979 & 7.990 & Essential \\
\hline
\end{tabular}
\end{table}

Table~\ref{tab:uniformity_metrics} demonstrates that QPP-RNG outputs transition predictably from near-uniform to essentially uniform behavior as the permutation depth $m$ or sample size $N_s$ increases, providing a quantitative and reproducible framework for assessing uniformity.

\subsection{QPP-RNG as a Self-Training True Uniform RNG (TURNG)}\label{sec:tur}

In the preceding analysis, the RPSS system was examined under the constraint of a deterministic pseudorandom number generator—specifically, an LCG PRNG initialized with a fixed seed. Once synchronous convergence is attained under recommended configurations such as \((N=4, m=4)\), this constraint can be relaxed. At this stage, the conjugate observables—modular-reduced permutation counts \(\tilde{n}_p\) and elapsed times \(\tilde{t}\)—form a closed-loop feedback system that operates as a \emph{True Uniform Random Number Generator} (TURNG). A defining feature of TURNG is its capacity for autonomous seed evolution: irrespective of whether the initial seed is secret or public, the feedback mechanism iteratively diminishes its influence. After a finite number of training cycles, the RPSS state space retains no statistical memory of the initial seed, producing sequences of uniformly distributed random numbers.

The self-training mechanism functions analogously to a lightweight generative model. The RPSS dynamics serve as a generator, yielding stochastic outputs from intrinsic runtime fluctuations, while feedback from \(\tilde{t}\) acts as a discriminator that enforces uniformity and suppresses correlations. This generator–discriminator loop enables the system to adapt naturally to device-specific imperfections such as clock instabilities, timing jitter, and thermal drift. Consequently, TURNG is both reproducible—due to its foundation in the RPSS framework—and robust, as it leverages environmental fluctuations that vary across hardware.

Comprehensive benchmarking across servers, desktops, mobile platforms, and embedded devices~\cite{qpp-rng-sci-kuang-2025} confirms that conventional digital hardware can function as a high-quality entropy source without requiring specialized components. Conceptually, TURNG occupies an intermediate niche between deterministic PRNGs and hardware TRNGs: it is software-defined, platform-agnostic, and quantum-inspired, yet delivers the statistical qualities of a genuine random source. This framework also supports a distributed extension in which devices train locally while sharing only aggregate statistics to refine global parameters, suggesting a federated and scalable pathway for universal TURNG deployment.

\subsection{Comparison with Representative RNG Methods}

Table~\ref{tab:iid} places RPSS in context with representative TRNG and QRNG implementations whose min-entropy has been evaluated using comparable NIST SP~800-90B methodologies. While many modern RNGs successfully pass standard statistical test suites, these results illustrate that test pass rates alone do not fully capture architectural differences in entropy generation mechanisms.

\begin{table}[h]
\centering
\caption{Comparative Min-Entropy \(\mathbf{H_{\min}}\) Results}
\label{tab:iid}
\renewcommand{\arraystretch}{1.1}
\begin{tabular}{|l|l|c|c|}
\hline
\textbf{RNGs} & \textbf{Class} &  \text{bits/bit} & \text{bits/byte} \\ 
\hline
RPSS (this work) & TURNG & \textbf{0.9974} & \textbf{7.9792} \\
\hline
Red Hat RNG~\cite{nistE54}          & CPU Jitter & 0.9316 & 7.4528\\
Microchip~\cite{nist2023E46}       & TRNG & 0.5071 & 4.0568 \\
van der Waals~\cite{Abraham-TRNG-2022} & TRNG & 0.9830 & 7.8640 \\ 
Tunneling Effect~\cite{Liu-e-qrng-2023} & QRNG & 0.9872 & 7.8976 \\
IDQ Quantis~\cite{nist2020E63}     & QRNG & 0.9843 & 7.8744 \\
Quside PCIe~\cite{nist2024E178}    & QRNG & 0.8142 & 6.5136 \\
\hline
\end{tabular}
\end{table}

We emphasize that the reported min-entropy values for validated CPU jitter RNGs (e.g., Red Hat RNG~\cite{nistE54}) correspond to entropy estimates after conservative conditioning or entropy crediting at the module boundary, as required by NIST SP~800-90B validation. These values therefore reflect effective post-conditioning entropy rather than intrinsic raw-source entropy. In contrast, the RPSS/QPP-RNG results reported in Table~\ref{tab:iid} correspond to the native output of the generator and do not rely on external conditioning, extractors, or whitening stages.

Conventional CPU jitter–based RNGs rely on stochastic microarchitectural effects and typically require explicit conditioning or conservative entropy crediting at the module interface, since raw jitter measurements alone do not guarantee uniform symbol distributions. Physical TRNGs based on electronic or material noise sources (e.g., Microchip ECC608 or van der Waals heterostructures) demonstrate strong entropy generation but depend on specialized hardware and analog components. QRNGs achieve high min-entropy by exploiting intrinsic quantum indeterminacy, though at the cost of optical or solid-state quantum hardware and associated system complexity.

In contrast, RPSS achieves Near to Essential uniformity through an internal structural mechanism: entropy harvested from timing jitter is amplified by permutation degeneracy and modular projection, without reliance on external extractors or post-processing. This distinguishes RPSS from both software jitter RNGs and hardware-based TRNG/QRNG systems. As a result, RPSS offers a unique design point—a software-only entropy source capable of producing statistically uniform outputs with min-entropy comparable to leading physical and quantum RNGs, while remaining portable across commodity platforms.

From an engineering perspective, RPSS may be viewed as a form of entropy purification achieved through deterministic computational dynamics rather than statistical post-processing, with the observed uniformity emerging from structured aggregation of skewed timing statistics.

This comparison clarifies that RPSS is not intended to replace dedicated hardware RNGs in all settings, but rather to provide a complementary, self-contained alternative where deployability, architectural simplicity, and intrinsic uniformity are prioritized.

\section{Conclusion}
\label{sec:conclusion}

In this work, we introduced and formalized the \emph{Random Permutation Sorting System} (RPSS), a novel, software-defined framework that leverages intrinsic hardware fluctuations to generate high-quality entropy. By modeling the system with a quantum-inspired formalism, we demonstrated how the conjugate observables of permutation count ($\hat{N}_p$) and elapsed time ($\hat{T}$) capture the dual nature of our entropy source: combinatorial randomness from structured computation and physical randomness from microarchitectural jitter. We showed, both theoretically and empirically, that while the raw distributions of these observables are right-skewed and heavy-tailed, a simple modular reduction transforms them into statistically uniform outputs. This mechanism elevates the RPSS from a conventional true random number generator (TRNG) into a \emph{self-converged True Uniform Random Number Generator} (TURNG).

The empirical validation confirms our theoretical predictions, including the mode, expectation, and variance of the permutation count distributions, which follow the negative binomial model with success probability $p = 1/N!$. Table~\ref{tab:modes} compares the observed and theoretical modes for $N=4$ and $m=1$--$4$, demonstrating close agreement and confirming that the RPSS framework accurately captures both combinatorial and stochastic aspects of the system. The compound nature of the elapsed-time observable, which incorporates real-world system noise, is a powerful source of unpredictability. The modular uniformization process, in which a large number of microstates collapse onto a small set of computational symbols, acts as a cryptographic extractor, eliminating residual correlations and producing outputs that pass stringent statistical tests. Our analysis of entropy convergence as a function of array size ($N$) and repetition parameter ($m$) provides a practical guide for implementing a self-stabilizing entropy source on commodity hardware.

\paragraph{High-Quality Cryptographic Key Generation}  
Modern cryptography relies on strong entropy sources for generating secret keys, nonces, and ephemeral parameters. Current approaches often require specialized hardware or system-level aggregators, which can be expensive or vulnerable to attacks. The RPSS offers a robust alternative: it continuously generates cryptographically sound random bits from ordinary digital hardware. Its self-training mechanism ensures that randomness quality is maintained over time, regardless of the initial seed, guaranteeing resilience and long-term security.

\paragraph{Decentralized Ledger Technologies and Blockchains}  
Blockchains and distributed ledger technologies require verifiable, unpredictable, and decentralized randomness for applications such as leader election, fair lotteries, and decentralized applications (dApps). Unlike traditional TRNGs, which can be centralized and susceptible to manipulation, RPSS can be deployed on every node, enabling a federated model in which each device contributes to a global entropy pool by training locally and sharing only aggregate, non-sensitive data. This approach can facilitate truly decentralized random beacons, enhancing the security and integrity of distributed systems.

\paragraph{Foundations for Post-Quantum Cryptography and Eco-Cryptosystems}  
As quantum computing advances, the demand for quantum-resistant cryptographic primitives grows. While RPSS does not directly employ quantum phenomena, its conceptual foundation in quantum-like statistical uncertainty—specifically, the non-commutativity of its observables—offers a fresh perspective on designing random generators. Its reliance on physical, irreducible noise and the ability to produce complex joint probability distributions makes RPSS a promising building block for post-quantum random number generation.  

Moreover, RPSS can serve as the entropy engine of a \emph{conceptual post-quantum eco-cryptosystem}, embedding high-quality randomness directly into post-quantum cryptographic (PQC) protocols. By integrating RPSS with PQC primitives—such as lattice-based key encapsulation, hash-based signatures, or homomorphic encryption—the resulting eco-cryptosystem achieves a self-contained, self-stabilizing, and platform-agnostic security foundation. This approach unifies deterministic post-quantum cryptography with intrinsic physical entropy, providing both robustness against quantum-enabled attacks and resilience from environmental noise, while maintaining software-defined flexibility.

\paragraph*{RPSS as a Quantum-Statistical TURNG}
The Random Permutation Sorting System (RPSS) represents, to our knowledge, the first True Uniform Random Number Generator (TURNG) whose foundation is explicitly rooted in a \emph{quantum-inspired statistical mechanism}, while simultaneously harvesting \emph{physical entropy from commodity computing hardware}. Unlike conventional TRNGs that rely on specialized devices or post-processed noise, RPSS leverages the dual randomness of combinatorial permutation counts and microarchitectural jitter, formalized through conjugate observables. This duality ensures high-quality, self-converging uniform outputs without external intervention. In the quantum computing era, RPSS holds particular value: its physical, irreducible entropy is inherently unpredictable, offering a robust source of randomness for post-quantum cryptography. By embedding RPSS into quantum-resistant protocols or eco-cryptosystems, it is possible to create software-defined, self-stabilizing cryptographic primitives that remain secure even against adversaries equipped with quantum capabilities, thereby bridging classical hardware, statistical quantum theory, and next-generation secure computation.

In conclusion, the RPSS represents more than a new algorithm for generating random numbers; it embodies a fundamental rethinking of how high-quality entropy can be harvested from the pervasive fluctuations within classical computing devices. By providing a secure, reproducible, and robust software-defined entropy source, RPSS democratizes access to true randomness and lays the groundwork for next-generation post-quantum eco-cryptosystems, thereby fortifying the foundations of security and trust in an increasingly interconnected digital world.


\bibliography{my.bib}

\appendix
\section{Detailed Derivation for Modular Reduction Uniformity}
\label{app:modular-derivation}

This appendix provides rigorous proofs for the asymptotic uniformity of the modular-reduced permutation count $\tilde{n}_p$ and elapsed time $\tilde{T}$. The analysis for both observables relies on the fundamental scale parameter of the process:
\begin{equation}
M := \mathbb{E}[\hat{N}_p] = m \cdot N!.
\end{equation}

\subsection{Uniformity of the Modular-Reduced Permutation Count $\tilde{n}_p$}
\label{app:modular-unif-derivation}

\begin{theorem}[Uniformity of $\tilde{n}_p$]
\label{thm:np_uniformity}
Let $\hat{N}_p \sim \text{NegativeBinomial}(m, p)$ with mean $M = m/p$, and let $R = 2^n$ be a modulus. For any fixed integer $m \ge 1$ and fixed $R$, the modular reduction $\tilde{n}_p = \hat{N}_p \bmod R$ is asymptotically uniform as $M \to \infty$ (or equivalently, as $p \to 0$). That is, for any residue $0 \le r < R$,
\[
\Pr[\tilde{n}_p = r] = \frac{1}{R} + O\left(\frac{1}{M}\right).
\]
\end{theorem}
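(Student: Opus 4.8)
The plan is to convert the statement into a decay estimate for the characteristic function of $\hat{N}_p$ away from the trivial frequency, via a roots-of-unity filter. Set $\omega = e^{2\pi i/R}$ and let $G(z) = \mathbb{E}\bigl[z^{\hat{N}_p}\bigr]$ be the probability generating function. The filter identity gives
\[
\Pr[\tilde{n}_p = r] = \sum_{j\ge 0}\Pr[\hat{N}_p = r+jR] = \frac{1}{R}\sum_{\ell=0}^{R-1}\omega^{-\ell r}\,G(\omega^\ell),
\]
and since $G(1)=1$ the frequency-zero term contributes exactly $1/R$. Thus it suffices to prove $\sum_{\ell=1}^{R-1}\lvert G(\omega^\ell)\rvert = O(1/M)$ with an implied constant depending only on $m$ and $R$ (in particular, independent of $r$).

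First I would compute $G$ explicitly: for the negative binomial supported on $k\ge m$ one has $G(z) = \bigl(pz/(1-(1-p)z)\bigr)^m$, so
\[
\lvert G(\omega^\ell)\rvert = \left(\frac{p}{\,\lvert 1-(1-p)\omega^\ell\rvert\,}\right)^{\!m}.
\]
The crux is a lower bound $\lvert 1-(1-p)\omega^\ell\rvert \ge c_R > 0$, uniform over $\ell\in\{1,\dots,R-1\}$ and over $p$ in a neighbourhood of $0$, with $c_R$ depending only on $R$. This is elementary: writing $a = 1-p$ and $\theta = 2\pi\ell/R$, one has $\lvert 1-a\omega^\ell\rvert^2 = a^2 - 2a\cos\theta + 1$, which as $p\to 0$ converges to $2-2\cos\theta = 4\sin^2(\pi\ell/R)\ge 4\sin^2(\pi/R)$, uniformly in $\ell$; since only the regime $p\to 0$ is needed, we may take $c_R = \sqrt{2}\,\sin(\pi/R)$ once $p$ is below a threshold depending only on $R$. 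Consequently $\lvert G(\omega^\ell)\rvert \le (p/c_R)^m$, and because $m\ge 1$ is fixed and $p = m/M$, summing over the $R-1$ nonzero frequencies yields $\bigl\lvert\Pr[\tilde{n}_p=r] - 1/R\bigr\rvert \le \tfrac{R-1}{R}(p/c_R)^m = O(p^m) = O(1/M)$.

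Finally I would address a boundary subtlety: because $\hat{N}_p$ is supported on $k\ge m$, the residue sum $\sum_{j\ge 0}\Pr[\hat{N}_p = r+jR]$ tacitly drops any terms with $r+jR<m$; there are at most $m$ of these, each bounded by $\Pr[\hat{N}_p = m] = p^m = O(M^{-m})$, so they are absorbed into the $O(1/M)$ error. I expect the only genuinely delicate point to be making the frequency bound uniform in $p$ (rather than merely pointwise) and confirming the error constant is independent of the residue $r$; the rest is a routine generating-function computation. A heuristic alternative that matches the discussion in Section~\ref{sec:modular-unif} would compare $\sum_j \Pr[\hat{N}_p = r+jR]$ directly with a Riemann sum for the limiting Gamma density of $\hat{N}_p/M$, controlling the error by the total variation of the pmf across a window of length $R$; this is equivalent but more computational, so I would present the Fourier version.
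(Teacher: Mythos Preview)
Your argument is correct and takes a genuinely different route from the paper. The paper expands $S_r = \sum_{s\ge 0}\binom{r+sR-1}{m-1}(1-p)^{r+sR-m}p^m$ directly: it replaces the binomial coefficient by its leading term $(sR)^{m-1}/(m-1)!$, applies the polylogarithm-type identity $\sum_{s\ge 0}s^{m-1}q^s = (m-1)!/(1-q)^m + O(1)$ with $q=(1-p)^R$, and tracks the cancellations to reach $1/R+O(1/M)$. Your roots-of-unity filter is cleaner: once the PGF is in closed form, the elementary lower bound $|1-(1-p)\omega^\ell|\ge c_R$ for $\ell\not\equiv 0$ does all the work, the constants depend on $m$ and $R$ transparently, and you in fact obtain the sharper error $O(p^m)=O(M^{-m})$ for $m\ge 2$, which the paper's expansion does not isolate. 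Interestingly, the paper switches to precisely your Fourier strategy (Poisson summation and characteristic-function decay) for the companion result on $\tilde{T}$, so your approach has the added virtue of treating both observables by the same mechanism. Your closing remark about the support restriction $k\ge m$ is unnecessary---the PGF already encodes the support, so the filter identity is exact as written---but it does no harm.
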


\begin{proof}
The probability of a residue $r$ is given by the sum over all equivalence classes:
\begin{equation}
\begin{aligned}
    S_r &:= \Pr[\tilde{n}_p = r] = \sum_{s=0}^{\infty} \Pr[\hat{N}_p = r + sR] \\
    &= \left(\frac{m}{M}\right)^m \sum_{s=0}^{\infty} \binom{r+sR-1}{m-1} \left(1 - \frac{m}{M}\right)^{r+sR-m}.
\end{aligned}
\end{equation}
Let $q := \left(1 - \tfrac{m}{M}\right)^R$. We proceed with an asymptotic expansion for large $M$.
Using the asymptotic form of the binomial coefficient,
\[
\binom{r+sR-1}{m-1} = \frac{(sR)^{m-1}}{(m-1)!} + O(s^{m-2}) \quad \text{as } s \to \infty,
\]
we can decompose the sum $S_r$ into a leading term and a remainder:
\begin{multline*}
S_r = \left(\tfrac{m}{M}\right)^m \left(1 - \tfrac{m}{M}\right)^{r-m} \times \\
\left[ \frac{R^{m-1}}{(m-1)!} \sum_{s=0}^{\infty} s^{m-1} q^s + O\left( \sum_{s=0}^{\infty} s^{m-2} q^s \right) \right].
\end{multline*}

The generating function for the polylogarithm gives:
\[
\sum_{s=0}^{\infty} s^{m-1} q^s = \frac{(m-1)!}{(1-q)^m} + O(1), 
\]
and 
\[
\quad \sum_{s=0}^{\infty} s^{m-2} q^s = O\left(\frac{1}{(1-q)^{m-1}}\right).
\]
Noting that $1 - q = \tfrac{mR}{M} + O(M^{-2})$, we substitute:
\begin{align*}
\frac{R^{m-1}}{(m-1)!} \sum_{s=0}^{\infty} s^{m-1} q^s
&= \frac{R^{m-1}}{(m-1)!} \cdot 
   \frac{(m-1)!}{(mR/M)^m}\,\bigl(1 + O(M^{-1})\bigr) \\
&= \frac{M^m}{m^m R}\,\bigl(1 + O(M^{-1})\bigr).
\end{align*}

The prefactor is $\left(\tfrac{m}{M}\right)^m \left(1 - \tfrac{m}{M}\right)^{r-m} = 1 + O(M^{-1})$.
Combining these results yields:
\begin{align*}
S_r 
&= \left(1 + O(M^{-1})\right) 
   \left[ \frac{1}{R}\bigl(1 + O(M^{-1})\bigr) + O\!\left(\tfrac{1}{M}\right) \right] \\
&= \frac{1}{R} + O\!\left(\tfrac{1}{M}\right).
\end{align*}
which completes the proof.
\end{proof}

\subsection{Uniformity of the Modular-Reduced Elapsed Time $\tilde{T}$}
\label{app:modular-unif-time}

\begin{theorem}[Uniformity of $\tilde{T}$]
\label{thm:t_uniformity}
Let $T = \sum_{j=1}^{\hat{N}_p} X_j$ be the elapsed time, where $\hat{N}_p$ is as in Theorem~\ref{thm:np_uniformity} and the $\{X_j\}$ are i.i.d. runtimes with mean $\mu_X$ and finite variance. Let $\mathbb{E}[T] = M \mu_X$ be the mean elapsed time.
Then, for a fixed modulus $R$, the modular reduction $\tilde{T} = T \bmod R$ is asymptotically uniform as $\mathbb{E}[T] \to \infty$:
\[
\Pr[\tilde{T} = t] = \frac{1}{R} + o(1) \quad \text{for } 0 \le t < R.
\]
A stronger result holds if the runtime distribution is non-lattice, yielding an error term of $O(1/\mathbb{E}[T])$.
\end{theorem}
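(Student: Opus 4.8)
The plan is to replace the explicit polylogarithm summation from the proof of Theorem~\ref{thm:np_uniformity} with a characteristic-function (Fourier) argument, which absorbs the extra convolution with the runtimes \(X_j\) for free. I would treat first the lattice case relevant to a clock measured in ticks, where \(T\) is integer-valued. Expanding the indicator of a residue class in \(R\)-th roots of unity, with \(\omega = e^{2\pi i /R}\),
\[
\Pr[\tilde T = t] \;=\; \frac{1}{R} + \frac{1}{R}\sum_{\ell=1}^{R-1}\omega^{-\ell t}\,\chi(\ell),\qquad \chi(\ell):=\mathbb{E}\!\left[e^{2\pi i\ell T/R}\right].
\]
Conditioning on \(\hat N_p\) and using independence of the \(X_j\) gives \(\chi(\ell)=G_{\hat N_p}\!\big(\varphi_X(\theta_\ell)\big)\), where \(\theta_\ell=2\pi\ell/R\), \(\varphi_X(\theta)=\mathbb{E}[e^{i\theta X}]\) is the characteristic function of a single runtime, and \(G_{\hat N_p}(z)=\big(pz/(1-(1-p)z)\big)^m\) is the probability generating function of the \(\mathrm{NegBin}(m,p)\) count. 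The task thus reduces to showing that each \(|\chi(\ell)|\), \(1\le\ell\le R-1\), is small for large \(M\).

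The key estimate is \(|\chi(\ell)|=O(p^m)\). For \(0<\ell<R\) one always has \(|\varphi_X(\theta_\ell)|\le 1\), and \(\varphi_X(\theta_\ell)=1\) would force \(X\) to live (a.s.) on the sublattice \((R/\gcd(\ell,R))\mathbb{Z}\)---for \(R=2^n\), to be a.s.\ even---in which case \(T\) lies on that sublattice and uniformity genuinely fails; I would therefore add the mild hypothesis \(\varphi_X(\theta_\ell)\neq 1\) for all \(0<\ell<R\) (e.g.\ \(\Pr[X\text{ odd}]>0\) when \(R=2^n\)). Writing \(z_\ell:=\varphi_X(\theta_\ell)\) and \(\eta:=\min_{0<\ell<R}|1-z_\ell|>0\), one has \(|1-(1-p)z_\ell|\ge|1-z_\ell|-p\ge\eta/2\) once \(p\le\eta/2\), hence \(|\chi(\ell)|\le(2p/\eta)^m\). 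Summing the \(R-1\) error terms and using \(p=m/M\), \(\mathbb{E}[T]=M\mu_X\) gives \(\Pr[\tilde T=t]=\tfrac1R+O(p^m)=\tfrac1R+O(1/\mathbb{E}[T])\); dropping the quantitative constant, \(\Pr[\tilde T=t]=\tfrac1R+o(1)\) as soon as \(p\to 0\).

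For a non-lattice runtime distribution \(\tilde T\) is a continuous variable on \([0,R)\), and the same \(\chi(\ell)\) reappear as Fourier coefficients on the circle of circumference \(R\): \(f_{\tilde T}(t)=\tfrac1R\sum_{\ell\in\mathbb{Z}}\overline{\chi(\ell)}\,e^{2\pi i\ell t/R}\). Non-latticeness yields \(|\varphi_X(\theta)|<1\) for every \(\theta\neq 0\), so each fixed mode obeys \(|\chi(\ell)|\le\big(p/(1-|\varphi_X(\theta_\ell)|)\big)^m\to 0\), already giving convergence to the uniform law. To promote this to the rate \(O(1/\mathbb{E}[T])\) I would assume Cramér's condition \(\limsup_{|\theta|\to\infty}|\varphi_X(\theta)|<1\) (automatic if \(X\) has a density), so that \(\delta:=1-\sup_{|\theta|\ge 2\pi/R}|\varphi_X(\theta)|>0\) and \(\sum_{\ell\neq 0}|\chi(\ell)|=O(p^m)\) together with a Riemann--Lebesgue tail.

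The main obstacle is precisely this uniformity over Fourier modes in the non-lattice case: with only a finite second moment \(\varphi_X\) need not be summable, so showing that the Fourier series actually represents the density of \(\tilde T\) and that its tail is controlled requires a regularity assumption (a density, or Cramér's condition) beyond what the theorem currently states. A secondary subtlety is isolating the sharp aperiodicity condition on \(X\): it must rule out \(\varphi_X(\theta_\ell)=1\), where uniformity fails outright, but---as the bound shows---values with \(|\varphi_X(\theta_\ell)|=1\) yet \(\varphi_X(\theta_\ell)\neq 1\) are harmless, so the genuine dichotomy is ``\(X\) avoids the coarse sublattices of \(\mathbb{Z}_R\)'' versus not, rather than strictly lattice versus non-lattice. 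Everything else---the conditioning on \(\hat N_p\), the substitution \(p=m/M\), and the collection of error terms---is routine bookkeeping.
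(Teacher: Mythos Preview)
Your proposal follows essentially the same route as the paper: both isolate the uniform term $1/R$ and then bound the nonzero Fourier coefficients $\chi(\ell)=\phi_T(2\pi\ell/R)$ of the wrapped distribution. The paper invokes Poisson summation on a presumed density $f_T$, writes the compound characteristic function as $\phi_T=\phi_{N_p}\!\bigl(-i\ln\phi_X\bigr)$, and then argues qualitatively that $|\phi_T(\omega)|\to 0$ because $T$ is a large random sum; you instead substitute the closed-form negative-binomial generating function $G_{\hat N_p}(z)=\bigl(pz/(1-(1-p)z)\bigr)^m$ directly and extract the explicit bound $|\chi(\ell)|\le(2p/\eta)^m=O(p^m)$, which is in fact sharper than the paper's stated $O(1/\mathbb{E}[T])$ for $m>1$. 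Your version is also more careful in two places the paper elides: you separate the integer-tick case (a finite discrete Fourier transform over $\mathbb{Z}_R$, where no density $f_T$ exists and Poisson summation is inapplicable) from the genuinely continuous case, and you correctly flag both the indispensable aperiodicity hypothesis $\varphi_X(2\pi\ell/R)\ne 1$ for $0<\ell<R$---without which the conclusion fails outright---and the need for a Cram\'er-type condition to control the infinite Fourier tail in the non-lattice regime. The paper's proof passes over both of these in silence.
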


\begin{proof}
The distribution of $\tilde{T}$ is given by the wrapped density:
\[
\Pr[\tilde{T} = t] = \sum_{k=0}^{\infty} f_T(t + kR), \quad 0 \le t < R,
\]
where $f_T$ is the probability density function of $T$.

The key insight is that for large $\mathbb{E}[T]$, the distribution $f_T(\tau)$ becomes increasingly concentrated and smooth around its mean $\mathbb{E}[T] \gg R$. By the Poisson summation formula,
\[
\sum_{k=-\infty}^{\infty} f_T(t + kR) = \frac{1}{R} \sum_{\ell=-\infty}^{\infty} \phi_T\left(\frac{2\pi\ell}{R}\right) e^{-i2\pi\ell t/R},
\]
where $\phi_T(\omega) = \mathbb{E}[e^{i\omega T}]$ is the characteristic function of $T$.

The $\ell=0$ term gives the uniform density $1/R$. For $\ell \neq 0$, we have:
\[
\left| \phi_T\left(\frac{2\pi\ell}{R}\right) \right| = \left| \phi_{N_p}\left( -i \ln \phi_X\left(\frac{2\pi\ell}{R}\right) \right) \right|,
\]
where $\phi_X$ is the characteristic function of the runtime.

Since $\mathbb{E}[T] = M \mu_X \to \infty$, and $T$ is a sum of a large number of random variables, $|\phi_T(\omega)|$ decays rapidly for $\omega \neq 0$. Specifically, for any fixed $\ell \neq 0$, $\phi_T(2\pi\ell/R) \to 0$ as $\mathbb{E}[T] \to \infty$. This convergence is exponential if the runtime distribution is non-lattice. Therefore, all terms for $\ell \neq 0$ vanish in the limit, leaving only the uniform term $1/R$.

The error bound $O(1/\mathbb{E}[T])$ follows from a more detailed analysis of the characteristic function's decay rate, which is inversely related to the variance of $T$, itself proportional to $M$ and thus $\mathbb{E}[T]$.
\end{proof}

Theorems~\ref{thm:np_uniformity} and~\ref{thm:t_uniformity} provide the mathematical foundation for the RPSS cryptographic output generation, demonstrating that modular reduction transforms the complex distributions of $\hat{N}_p$ and $\hat{T}$ into nearly uniform outputs when the system's inherent scale parameter $M$ is sufficiently large.
\end{document}